\documentclass[journal,letterpaper]{IEEEtran}

%
%
\usepackage[utf8]{inputenc} 
\usepackage[T1]{fontenc}
\usepackage{url}
\usepackage{ifthen}
\usepackage{cite}
\usepackage[cmex10]{amsmath} 

\usepackage{amsfonts}
\usepackage{algorithmic}
\usepackage{algorithm}
\usepackage{array}
\usepackage[caption=false,font=normalsize,labelfont=sf,textfont=sf]{subfig}
\usepackage{textcomp}
\usepackage{stfloats}
\usepackage{verbatim}
\usepackage{graphicx}
\usepackage{amsmath}
\usepackage{amssymb}
\usepackage{amsthm}
\usepackage{graphicx}
\usepackage{epstopdf}
\usepackage{multirow}
\usepackage{stmaryrd}
\usepackage[colorlinks=true,linkcolor=blue,citecolor=blue]{hyperref}
\newtheorem{theorem}{Theorem}
\newtheorem{lemma}{Lemma}
\newtheorem{proposition}{Proposition}

\theoremstyle{definition}
\newtheorem{definition}{Definition}
\newtheorem{remark}{Remark}
\newtheorem{example}{Example}
\interdisplaylinepenalty=2500 

\hyphenation{op-tical net-works semi-conduc-tor}
\pagestyle{empty}
\makeatletter
\def\ps@IEEEtitlepagestyle{%
  \def\@oddhead{\relax}%
  \def\@evenhead{\relax}%
  \def\@oddfoot{\relax}%
  \def\@evenfoot{\relax}}
\makeatother
\begin{document}
\thispagestyle{empty}
\title{A Parity-Consistent Decomposition Method for the Weight Distribution of Pre-Transformed Polar Codes} 


\author{$\text{Yang Liu}^{\ast}$, $\text{Bolin Wu}^{\dagger}$, $\text{Yuxin Han}^{\ast}$ and $\text{Kai Niu}^{\ast}$ \\
        $^{\ast}\text{Key}$ Laboratory of Universal Wireless Communication, Ministry of Education \\
        Beijing University of Posts and Telecommunications, Beijing, China \\
	  $^{\dagger}\text{School}$ of Mathematical Sciences, Beijing University of Posts and Telecommunications, Beijing, China \\
        \{liuyang, bolinwu, hanyx, niukai\}@bupt.edu.cn
\thanks{This work was supported by the National Natural Science Foundation of China under Grant 62471054. \textit{(Corresponding author: Kai Niu.)}}}

\maketitle


\begin{abstract}
This paper introduces an efficient algorithm based on the Parity-Consistent Decomposition (PCD) method to determine the WD of pre-transformed polar codes. First, to address the bit dependencies introduced by the pre-transformation matrix, we propose an iterative algorithm to construct an \emph{Expanded Information Set}. By expanding the information bits within this set into 0s and 1s, we eliminate the correlations among information bits, thereby enabling the recursive calculation of the Hamming weight distribution using the \emph{PCD method}. Second, to further reduce computational complexity, we establish the theory of equivalence classes for pre-transformed polar codes. Codes within the same equivalence class share an identical weight distribution but correspond to different \emph{Expanded Information Set} sizes. By selecting the pre-transformation matrix that minimizes the \emph{Expanded Information Set} size within an equivalence class, we optimize the computation process. Numerical results demonstrate that the proposed method significantly reduces computational complexity compared to existing deterministic algorithms.
\end{abstract}

\begin{IEEEkeywords}
Pre-transformed polar codes, weight distribution, equivalent class.
\end{IEEEkeywords}

\section{Introduction}
\IEEEPARstart{P}{olar} codes are the first class of capacity-achieving codes \cite{5075875} and have been adopted as the coding scheme for control channels in the 5G NR standard \cite{3gpp38212}. However, in the short-to-medium block length regime, their error-correction performance still lags behind the finite-length capacity limit\cite{Polyanskiy2010Channel}. To bridge this gap and enhance the error-correction capability at finite lengths, various pre-transformed polar coding schemes have been proposed, including CRC-concatenated polar codes\cite{Niu2012CRC}, parity-check polar codes\cite{Zhang2018Parity}, polar subcodes\cite{Trifonov2016Polar}, and PAC codes\cite{Arikan2019From}.

To better predict the error-correction performance of pre-transformed polar codes, recent research has focused on analyzing their Hamming weight distributions. Typically, the first few terms of the weight spectrum suffice to accurately estimate the ML performance in the high SNR regime. Specifically, \cite{Piao2020Sphere} utilized a sphere decoding algorithm to enumerate the minimum Hamming weight codewords of CRC-concatenated polar codes, while the minimum weight codewords of PAC codes have been investigated in \cite{Rowshan2022Fast,rowshan2023formation,Ellouze2023Low}. For general pre-transformed polar codes, \cite{Zunker2024Enumeration} proposed a tree intersection method to enumerate the minimum weight codewords, and \cite{Miloslavskaya2022Computing} calculated the number of low-weight codewords via an $X4$-based construction. However, computing the complete Hamming weight distribution of pre-transformed polar codes remains a more challenging task. \cite{Li2021Weight} derived the average Hamming weight distribution for random pre-transformed polar codes, but this yields only an approximate result. More recently, \cite{yao2024deterministic} achieved the exact recursive calculation of the complete weight distribution by representing the polar code as a union of polar cosets.

However, existing algorithms for computing the complete Hamming weight distribution of pre-transformed polar codes still suffer from high computational complexity. To mitigate this, we further exploit the structural properties of pre-transformed polar codes based on the \emph{PCD method}\cite{Liu2506:Parity}. Fig. \ref{fig1} illustrates the block diagram of the proposed computation framework. First, for any arbitrary pre-transformed polar code, we identify a corresponding equivalence class characterized by an identical Hamming weight distribution, where each code within the class possesses a distinct pre-transformation matrix. Subsequently, we calculate the size of the expanded information set, denoted by $\lambda$, for each code in the equivalence class. It is worth noting that $\lambda$ is a critical parameter that exponentially influences the computational complexity. Finally, the pre-transformed polar code corresponding to the minimum $\lambda$ is selected to compute the complete Hamming weight distribution via the \emph{PCD method}.

\textit{Notation Conventions:} Throughout this paper, we use $u_{1}^{N}$ to denote an $N$-dimensional vector and $u_{i}$ to denote the $i$-th element in $u_{1}^{N}$. We write $u_{\mathcal{A}}$ to denote a subvector of $u_{1}^{N}$ given by elements with indices from the set $\mathcal{A}$. The notation ${0}_{1}^{N}$ is used to denote the all-zero vector. We use $u_{1}^{N} \circledast v_{1}^{N}$ to denote the convolution of $u_{1}^{N}$ and $v_{1}^{N}$. We write $\llbracket a,b\rrbracket$ to denote the set $\left\{i|a\leq i\leq b,i\in \mathbb{Z}\right\}$.
\section{Preliminaries of Polar Codes}
\subsection{Pre-Transformed Polar Codes}
Given $N=2^{n}$, an $(N,K)$ polar code is generated by the $N$-dimensional matrix $G_{N}=K_{2}^{\otimes n}$, where ${K}_{2}^{\otimes n}$ is the $n$-th Kronecker power of kernel ${K}_{2}=\begin{bmatrix}\begin{smallmatrix}1 & 0 \\1 & 1 \end{smallmatrix}\end{bmatrix}$. Any codeword is expressed as $x_{1}^{N}=u_{1}^{N}G_{N}$, where $u_{1}^{N}$ is the uncoded sequence comprising $K$ information bits and $(N-K)$ frozen bits. The information set $\mathcal{I}$ and frozen set $\mathcal{F}$ satisfy the relation $\mathcal{I}\cup \mathcal{F}=\llbracket 1,N\rrbracket$. \\ 

Let $T$ be an $N\times N$ upper-triangular pre-transformation matrix. Then the codeword of pre-transformed polar code is given by $x_{1}^{N}=u_{1}^{N}TK_{2}^{\otimes n}$. We denote the vector $u_{1}^{N}T$ by $v_{1}^{N}$.
\subsection{PCD Method for Polar Codes}
In the \emph{PCD method}\cite{Liu2506:Parity}, a polar code $\mathcal{C}$ is represented as a union of \emph{PCD coset} with the following structure:
\begin{equation}\label{2-1}
  \mathcal{C}(\hat{u}_{\mathcal{P}})\triangleq\{u_{1}^{N}G_{N}|u_{\mathcal{P}}=\hat{u}_{\mathcal{P}},u_{\mathcal{I}\backslash \mathcal{P}}\in \{0,1\}^{|\mathcal{I}|-|\mathcal{P}|}\},
\end{equation}
where set $\mathcal{P} \triangleq \{i|i\in \mathcal{I}, i \text{ mod } 2 =0, i-1 \in \mathcal{F}\}$. When only one of $u_{2m-1}$ and $u_{2m}$ is information bit, the index of the information bit is record in $\mathcal{P}$. It would not happen that $u_{2m-1}$ is an information bit and $u_{2m}$ is a frozen bit since polar codes are decreasing monomial codes.  

The coset in \eqref{2-1} satisfies that $u_{2m-1}$ and $u_{2m}$ are both information bits or frozen bits. The codeword of the coset can be represented as $c=(c_{1},c_{2})$, where $c_{1}$ and $c_{2}$ are codewords of polar codes $\mathcal{C}_{1}(\hat{u}_{\mathcal{P}})$ and $\mathcal{C}_{2}(\hat{u}_{\mathcal{P}})$, respectively. $\mathcal{C}_{1}(\hat{u}_{\mathcal{P}})$ and $\mathcal{C}_{2}(\hat{u}_{\mathcal{P}})$ is abbreviated to $\mathcal{C}_{1}$ and $\mathcal{C}_{2}$ in unambiguous contexts. The generator matrix of $\mathcal{C}_{1}$ and $\mathcal{C}_{2}$ is $K_{2}^{\otimes n-1}$. The information set and frozen set of $\mathcal{C}_{1}$ and $\mathcal{C}_{2}$ are given by
\begin{align}
  \mathcal{I}_{s}&\triangleq \{m|2m-1\in \mathcal{I}, 2m\in \mathcal{I}\}, \label{2-2}\\
  \mathcal{F}_{s}&\triangleq \{m|2m-1\in \mathcal{F}, 2m\in \mathcal{F}\}.\label{2-3}
\end{align}
The value of frozen bits in the uncoded sequence $u^{(1)}$ and $u^{(2)}$ corresponding to $\mathcal{C}_{1}$ and $\mathcal{C}_{2}$ is determined by
\begin{equation}\label{2-4}
  u_{i}^{(1)}=u_{2i-1}\oplus u_{2i}, u_{i}^{(2)}=u_{2i}, i\in \mathcal{F}_{s}.
\end{equation} 

The WD of the polar code $\mathcal{C}$ is expressed as
\begin{equation}
  g_{\mathcal{C}}=\sum_{\hat{u}_{\mathcal{P}}\in \{0,1\}^{|\mathcal{P}|}}{g_{\mathcal{C}(\hat{u}_{\mathcal{P}})}}=\sum_{\hat{u}_{\mathcal{P}}\in \{0,1\}^{|\mathcal{P}|}}{g_{\mathcal{C}_{1}(\hat{u}_{\mathcal{P}})}\circledast g_{\mathcal{C}_{2}(\hat{u}_{\mathcal{P}})}}
\end{equation}
The WDs of component codes $\mathcal{C}_{1}(\hat{u}_{\mathcal{P}})$ and $\mathcal{C}_{2}(\hat{u}_{\mathcal{P}})$ can be computed in the same manner. This recursive process allows us enumerate the WDs of short codes and combine these results to derive the WDs of long codes. 

\begin{figure}[t]
     \centering
     \includegraphics[width=1\linewidth]{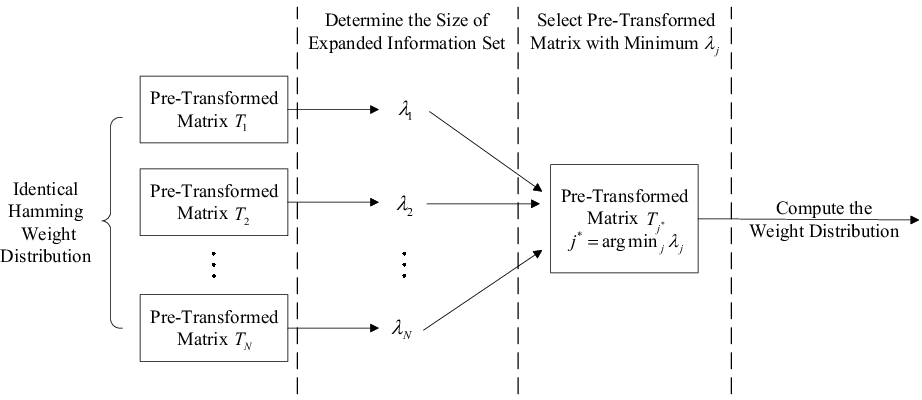}
     \caption{Block diagram of the proposed weight distribution computation of pre-transformed polar codes}
     \label{fig1}
\end{figure}

\section{WD Computation of Pre-Transformed Polar Codes Based on the PCD Method}
In this section, we first propose an iterative algorithm to identify the information bits that need to be expanded in pre-transformed polar codes. This expansion serves to eliminate the dependencies introduced by the pre-transformation and ensures compliance with the structure of a standard \emph{PCD coset}. Building upon this decomposition, we then establish a recursive method for calculating the Hamming weight distribution of pre-transformed polar codes.

\subsection{Determination of the Expanded Information Set}
In the sequence $v_{1}^{N}$, $v_{i}$ can be expressed as
\begin{equation}\label{2-6}
    v_{i} = u_{i} \oplus \sum_{j\leq i-1, T_{j,i}=1} u_{j}=u_{i} \oplus \sum_{j\leq i-1, T_{j,i}=1, j\in \mathcal{I}} u_{j}.
\end{equation}
When $u_{i}$ is a frozen bit, although $v_{i}$ and the preceding bits (indexed by $j\in \mathcal{I}$ with $T_{j,i}=1$) can take any value in $\{0,1\}$, they are not mutually independent. This dependency violates the prerequisite of the \emph{PCD method}. Consequently, it is necessary to expand the corresponding information bits over $\{0,1\}$ to resolve this coupling. We elaborate on this strategy in Proposition \ref{Proposition1}.

\begin{proposition}\label{Proposition1}
  In the pre-transformed polar codes $\mathcal{C}$, when $u_{i}$ is a frozen bit, then $u_{j}$ needs to be expanded as 0 and 1 in the \emph{PCD method} if $u_{j}$ is an information bit and $T_{j,i}=1$. Then $v_{i}$ and $v_{j}$ with $T_{j,i}=1$ are viewed as frozen bits in the PCD coset of $\mathcal{C}$.
\end{proposition}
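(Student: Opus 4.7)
The plan is to show that the dependency encoded in~\eqref{2-6} leaves no alternative but to expand the listed information bits, and then to verify that inside each branch of the expansion both $v_i$ and the corresponding $v_j$'s become genuinely frozen positions of a \emph{PCD coset} in the sense of~\eqref{2-1}.

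First, I would specialize~\eqref{2-6} to the case where $u_i$ is a frozen bit. With $u_i$ fixed, the relation reduces to $v_i = \sum_{j<i,\, T_{j,i}=1,\, j\in\mathcal{I}} u_j$. Because the \emph{PCD method} operates on an uncoded sequence whose entries are either fixed (frozen) or freely varying (information), every $v_k$ must sit cleanly in one of these two categories; here $v_i$ instead sits in a linear constraint that mixes a frozen coordinate on the left with several information coordinates on the right. Hence no admissible labeling of free bits is possible until these couplings are broken.

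Second, I would justify the expansion. Splitting each offending $u_j$ into the two branches $u_j=0$ and $u_j=1$ and summing weight distributions across the branches is the minimal operation that turns the above linear constraint into a deterministic assignment of $v_i$ within each branch; this is exactly the transposition of the role played by the parity set $\mathcal{P}$ in~\eqref{2-1} to the pre-transformed setting. Within a fixed branch, $u_j$ is no longer a free information bit, so the mapping from the remaining free bits to $v_j$ coincides with that of a frozen coordinate whose value is inherited from the expansion. Consequently $v_j$, like $v_i$, can be labeled frozen when forming the \emph{PCD coset}.

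The main obstacle is of a bookkeeping nature rather than a conceptual one: I would need to verify that treating $\{i\}\cup\{j : T_{j,i}=1,\, j\in\mathcal{I}\}$ as frozen in the $v$-coordinates yields a well-defined polar-code coset with generator $K_2^{\otimes n}$, and that no hidden dependency survives among the freshly frozen $v_j$'s (which themselves may contain further couplings through~\eqref{2-6}). Handling this reliably amounts to iterating the same argument on the newly frozen $v_j$'s, which is exactly why the \emph{Expanded Information Set} referenced in the introduction is constructed iteratively rather than in a single step.
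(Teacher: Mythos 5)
Your argument is correct and follows essentially the same route as the paper's proof: both start from \eqref{2-6}, observe that an information bit $u_j$ with $T_{j,i}=1$ makes $v_i$ a dependent variable incompatible with the PCD coset structure of \eqref{2-1}, and conclude that fixing $u_j$ to each value in $\{0,1\}$ renders both $v_j$ and $v_i$ constants, hence frozen, within each resulting coset. Your closing remark about iterating the argument on the newly frozen $v_j$'s is a point the paper defers to the subsequent Remark and Algorithm~\ref{alg:1} rather than addressing inside the proof, so flagging it does no harm.
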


\begin{proof}
Based on \eqref{2-6}, when $u_{j}$ is an information bit and $T_{j,i}=1$, it can take values from $\{0,1\}$. This makes the value of $v_{i}$ variable and dependent on $u_{j}$. In the \emph{PCD method}, if $v_{i}$ remains a variable dependent on an information bit from a different index, the standard recursive decomposition into independent sub-codes $\mathcal{C}_{1}$ and $\mathcal{C}_{2}$ cannot be directly applied, as the parity constraints would be ambiguous.

To resolve this coupling, the information bit $u_{j}$ must be \emph{expanded}. This means we decompose the original set of codewords into subsets (cosets) where $u_{j}$ takes a fixed value (either 0 or 1). When $u_{j}$ is fixed to a specific value $\alpha \in \{0,1\}$, the term $u_{j}$ in the summation for $v_{i}$ becomes a constant. Consequently, $v_{j}$ itself becomes determined (conditioned on previous bits) because its primary variable $u_{j}$ is now fixed. Similarly, $v_{i}$ becomes determined because the variable $u_{j}$ causing the uncertainty is fixed.

In the context of the \emph{PCD coset} defined by this specific expansion, the bits $v_{j}$ and $v_{i}$ behave as constants rather than free variables. Therefore, strictly within the scope of this PCD coset, $v_{j}$ and $v_{i}$ are effectively treated as frozen bits (carrying fixed values determined by the expansion and the matrix $T$). This allows the recursive calculation of WDs to proceed as if these positions were frozen in a standard polar code.
\end{proof}

After expanding the information bits in $u_{1}^{N}$ constrained by the frozen set, when $u_{i}\in \{0,1\}$, then $v_{i}\in {0,1}$ is an information bit. Consequently, we can conclude that the information and frozen sets of $v_{1}^{N}$ are identical to those of the expanded $u_{1}^{N}$.

\begin{algorithm}[!t]
    \caption{Determine the expanded information set}
    \label{alg:1}
    \renewcommand{\algorithmicrequire}{\textbf{Input:}}
    \renewcommand{\algorithmicensure}{\textbf{Output:}}
    \begin{algorithmic}[1]
        \REQUIRE Code length $N$, frozen set $\mathcal{F}$, matrix $T$  
        \ENSURE Expanded information set $\tilde{\mathcal{P}}$    
        \STATE $\tilde{\mathcal{P}}\leftarrow \varnothing$, $\mathcal{Q}\leftarrow \llbracket 1,N\rrbracket$
        \WHILE{$\mathcal{Q}\neq \varnothing$}
            \FOR {$i=N:1$}
                \IF{$i\in \mathcal{F}$ or $i\in \tilde{\mathcal{P}}$}
                    \STATE $\tilde{\mathcal{P}}\leftarrow \tilde{\mathcal{P}}\cup \{j|T_{j,i}=1,1\leq j\leq i-1\}$
                \ENDIF
            \ENDFOR
            \STATE $\mathcal{Q}^{\prime}\leftarrow \varnothing$
            \FOR {$i=1:N/2$}
                \IF{$2i-1\in \tilde{\mathcal{P}}\cup \mathcal{F}$ and $2i\notin \tilde{\mathcal{P}}\cup \mathcal{F}$}
                    \STATE $\mathcal{Q}^{\prime}\leftarrow \mathcal{Q}^{\prime}\cup 2i$
                \ELSIF{$2i-1\notin \tilde{\mathcal{P}}\cup \mathcal{F}$ and $2i\in \tilde{\mathcal{P}}\cup \mathcal{F}$}
                    \STATE $\mathcal{Q}^{\prime}\leftarrow \mathcal{Q}^{\prime}\cup 2i-1$
                \ENDIF
            \ENDFOR
            \STATE $\tilde{\mathcal{P}}\leftarrow \tilde{\mathcal{P}}\cup \mathcal{Q}^{\prime}$, $\mathcal{Q}\leftarrow \mathcal{Q}^{\prime}$
        \ENDWHILE
        \RETURN $\tilde{\mathcal{P}}$
    \end{algorithmic}
\end{algorithm}

\begin{remark} 
Based on Proposition 1, determining the value of $v_{i}$ for each $i\in \mathcal{F}$ requires expanding the information bit $u_{j}$ such that $u_{i}$ constrains $u_{j}$ where $j$ and $i$ satisfy $T_{j,i}=1$. However, after expanding $u_{j}$, the corresponding bit $v_{j}$ is no longer an information bit. Consequently, we must further expand the information bits $u_{k}$ constrained by $u_{j}$ to determine $v_{j}$, where $k$ and $j$ satisfy $T_{k,j}=1$. To identify the indices of information bits that need to be expanded, we sequentially scan from index $N$ down to index $1$ and search for frozen bits and the information bits constrained by them. This ordering avoids repeatedly revisiting previously processed information bits, since $T$ is an upper-triangular matrix.
\end{remark}

\begin{definition}
\label{def:couple_mapping}
The \emph{coupling mapping} $f$ is defined as
\begin{equation}\label{2-7}
  f(i) = \begin{cases}
    i-1, & \text{if } i \text{ is even}, \\
    i+1, & \text{if } i \text{ is odd}. 
  \end{cases}
\end{equation}
Furthermore, we refer to $v_{f(i)}$ as the \emph{couple bit} of $v_{i}$.
\end{definition}

Once expanding the information bits constrained by frozen bits, the remain information bits are mutually independent. Each coset can be viewed as an standard polar code and the \emph{PCD method} can be adopted to compute its WD. It is noted that there are additional information bits required to expand in the \emph{PCD method} when $v_{i}$ is an information bit but its couple bit $v_{f(i)}$ is a frozen bit.

\begin{definition}
In the pre-transformed polar codes, the \emph{expanded information set} $\tilde{\mathcal{P}}$ consisting of the information bits constrained by frozen bits or the information bits whose couple bit is a frozen bit. Furthermore, the \emph{full-frozen position set} and \emph{full-information position set} are defined as
\begin{align}
  \tilde{\mathcal{F}_{\text{s}}}&\triangleq \left\{m|2m-1\in \tilde{\mathcal{F}},2m\in \tilde{\mathcal{F}}\right\}, \\
  \tilde{\mathcal{I}_{\text{s}}}&\triangleq\left\{m|2m-1\in \tilde{\mathcal{I}},2m\in \tilde{\mathcal{I}}\right\},
\end{align} 
where $\tilde{\mathcal{F}}\triangleq \mathcal{F}\cup \tilde{\mathcal{P}}$ and $\tilde{\mathcal{I}}\triangleq \mathcal{I}\backslash \tilde{\mathcal{P}}$. 
\end{definition}

Algorithm \ref{alg:1} details the iterative procedure for constructing the expanded information set $\tilde{\mathcal{P}}$. The process is designed to satisfy two distinct types of constraints ensuring the applicability of the \emph{PCD method}. The procedure alternates between two phases until convergence. The first phase (Lines 3-7) enforces the dependency constraints derived from the frozen bits. By performing a backward scan based on uncoded sequence $u_{1}^{N}$, it identifies and expands the information bits that are constrained by the frozen set $\mathcal{F}$. As noted, once these constraints are resolved, the remaining information bits are mutually independent.
The second phase (Lines 9-15) addresses the structural requirements of the \emph{PCD method}. Specifically, when $v_{i}$ is an information bit but its couple bit $v_{f(i)}$ is a frozen bit, the \emph{PCD method} requires additional expansion. The algorithm detects such mixed pairs and forces the unconstrained bit into $\tilde{\mathcal{P}}$. This expansion is repeated iteratively via the set $\mathcal{Q}$ until all constraints are satisfied.
\subsection{WD Computation for Pre-Transformed Polar Codes}
\begin{theorem}
Given a pre-transformed polar code $\mathcal{C}$ with the generator matrix ${G}_{N}={K}_{2}^{\otimes n}$ and pre-transformed matrix $T$, its WD ${g}_{\mathcal{C}}$ can be calculated as
\begin{equation}\label{eq13}
  {g}_{\mathcal{C}}=\sum\limits_{\hat{u}_{\tilde{\mathcal{P}}}\in {{\{0,1\}}^{|\tilde{\mathcal{P}}|}}}{{g}_{{{\mathcal{C}}}(\hat{u}_{\tilde{\mathcal{P}}})}}=\sum\limits_{\hat{u}_{\tilde{\mathcal{P}}}\in {{\{0,1\}}^{|\tilde{\mathcal{P}}|}}}{{g}_{{{\mathcal{C}}_{1}}(\hat{u}_{\tilde{\mathcal{P}}})}\circledast {g}_{{{\mathcal{C}}_{2}}(\hat{u}_{\tilde{\mathcal{P}}})}}
\end{equation}
where $\mathcal{C}_{i}(\hat{u}_{\tilde{\mathcal{P}}})$, for $i=1,2$, is a standard polar code with the generator matrix ${K}_{2}^{\otimes n-1}$, information set $\tilde{\mathcal{I}}_{s}$ and frozen set $\tilde{\mathcal{F}}_{s}$. For the code $\mathcal{C}_{i}(\hat{u}_{\tilde{\mathcal{P}}})$, the frozen bit in position $m \in \tilde{\mathcal{F}}_{s}$ is given by the $i$-th bit of $(v_{2m-1},v_{2m}){K}_{2}$ after expanding the information bits $\hat{u}_{\tilde{\mathcal{P}}}$ as $\{0,1\}^{|\tilde{\mathcal{P}}|}$. 
\end{theorem}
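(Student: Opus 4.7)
The plan is to reduce the theorem to the standard PCD identity for ordinary polar codes (the display immediately after \eqref{2-4}) by arguing that, once the bits indexed by $\tilde{\mathcal{P}}$ are expanded, each coset $\mathcal{C}(\hat{u}_{\tilde{\mathcal{P}}})$ behaves exactly like a standard polar coset on which one step of the PCD recursion can be performed. I would first establish the first equality as a disjoint-union identity: the map $u_1^N\mapsto u_1^N T K_2^{\otimes n}$ is injective (both $T$ and $K_2^{\otimes n}$ are invertible), so distinct fixings of $\hat{u}_{\tilde{\mathcal{P}}}$ produce disjoint subcodes, and letting $u_{\mathcal{I}\setminus\tilde{\mathcal{P}}}$ vary freely exhausts $\mathcal{C}$. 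Weight enumerators being additive over disjoint unions then gives $g_{\mathcal{C}}=\sum_{\hat{u}_{\tilde{\mathcal{P}}}} g_{\mathcal{C}(\hat{u}_{\tilde{\mathcal{P}}})}$.

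For the second equality, I would fix $\hat{u}_{\tilde{\mathcal{P}}}$ and show that $\mathcal{C}(\hat{u}_{\tilde{\mathcal{P}}})$ is a standard polar coset of length $N$ with information set $\tilde{\mathcal{I}}$ and frozen set $\tilde{\mathcal{F}}$. By Proposition \ref{Proposition1} together with the backward scan in lines 3--7 of Algorithm \ref{alg:1}, for every $i\in\tilde{\mathcal{F}}$ the bit $v_i$ in \eqref{2-6} is pinned down by the original frozen values and $\hat{u}_{\tilde{\mathcal{P}}}$, whereas for $i\in\tilde{\mathcal{I}}$ the bits $v_i$ are free and mutually independent, since $\tilde{\mathcal{P}}$ is closed under the relation $\{j:T_{j,i}=1\}$ for $i\in\tilde{\mathcal{F}}$ and $T$ is upper-triangular. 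Lines 9--15 of Algorithm \ref{alg:1} further ensure that every couple pair $(2m-1,2m)$ is homogeneous, so $\tilde{\mathcal{F}}_s$ and $\tilde{\mathcal{I}}_s$ partition $\llbracket 1,N/2\rrbracket$. Applying the Plotkin decomposition $x_1^N=(c_1,c_2)$ with $c_i=v^{(i)}K_2^{\otimes n-1}$, $v_m^{(1)}=v_{2m-1}\oplus v_{2m}$ and $v_m^{(2)}=v_{2m}$, one sees that for $m\in\tilde{\mathcal{F}}_s$ the pair $(v_m^{(1)},v_m^{(2)})=(v_{2m-1},v_{2m})K_2$ yields precisely the prescribed frozen values of $\mathcal{C}_i(\hat{u}_{\tilde{\mathcal{P}}})$, while for $m\in\tilde{\mathcal{I}}_s$ both entries are free information bits. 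This identifies $\mathcal{C}_1(\hat{u}_{\tilde{\mathcal{P}}})$ and $\mathcal{C}_2(\hat{u}_{\tilde{\mathcal{P}}})$ as the stated standard polar codes with generator $K_2^{\otimes n-1}$. Because the Hamming weight is additive over the two halves and $c_1,c_2$ range independently, the coset weight enumerator equals the convolution $g_{\mathcal{C}_1(\hat{u}_{\tilde{\mathcal{P}}})}\circledast g_{\mathcal{C}_2(\hat{u}_{\tilde{\mathcal{P}}})}$, and summing over $\hat{u}_{\tilde{\mathcal{P}}}$ finishes the proof.

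The hard part will be the independence claim above: rigorously verifying that once Algorithm \ref{alg:1} terminates, the bits $\{v_i\}_{i\in\tilde{\mathcal{I}}}$ are genuinely free and mutually independent, and each $v_i$ with $i\in\tilde{\mathcal{F}}$ is a function only of the original frozen values and $\hat{u}_{\tilde{\mathcal{P}}}$. I would prove this by descending induction on $i$, exploiting the upper-triangular structure of $T$ so that the expansion at index $i$ only references indices already processed, mirroring the backward loop and closure in Algorithm \ref{alg:1}. Once this structural lemma is in hand, the reduction to the one-step PCD identity is immediate and the remainder is bookkeeping.
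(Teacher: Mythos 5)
Your proposal is correct and follows essentially the same route as the paper: decompose $\mathcal{C}$ into disjoint cosets indexed by $\hat{u}_{\tilde{\mathcal{P}}}$, use Proposition~\ref{Proposition1} and the two phases of Algorithm~\ref{alg:1} to verify that each coset is a standard \emph{PCD coset} (decoupled free bits, homogeneous couple pairs), and then apply one step of the Plotkin/PCD recursion to get the convolution before summing. You actually spell out more than the paper does --- the disjointness via injectivity of the encoding map, the bijectivity of $(v_{2m-1},v_{2m})\mapsto(v_{2m-1},v_{2m})K_{2}$ underlying the independence of $c_{1}$ and $c_{2}$, and the descending induction for the independence of $\{v_{i}\}_{i\in\tilde{\mathcal{I}}}$ --- where the paper simply cites the PCD method of its reference.
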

\begin{proof}
The calculation of $g_{\mathcal{C}}$ relies on decomposing the pre-transformed polar code into disjoint cosets by expanding the information bits in $\tilde{\mathcal{P}}$.
According to Algorithm \ref{alg:1} and Proposition 1, the construction of $\tilde{\mathcal{P}}$ ensures two key properties for the uncoded sequence $v_{1}^{N}=u_{1}^{N}T$:
\begin{enumerate}
    \item \emph{Decoupling:} The dependencies introduced by the pre-transformation matrix $T$ are resolved. Specifically, after fixing the bits in $\tilde{\mathcal{P}}$, the remaining information bits in the set $\mathcal{I} \setminus \tilde{\mathcal{P}}$ are mutually independent.
    \item \emph{Structural Consistency:} For any pair $(v_{2m-1}, v_{2m})$, the bits are either both effectively frozen (belonging to $\mathcal{F} \cup \tilde{\mathcal{P}}$) or both information bits (belonging to $\mathcal{I} \setminus \tilde{\mathcal{P}}$).
\end{enumerate}

Consequently, each specific coset $\mathcal{C}(\hat{u}_{\tilde{\mathcal{P}}})$ characterized by $\hat{u}_{\tilde{\mathcal{P}}} \in \{0,1\}^{|\tilde{\mathcal{P}}|}$ possesses an effective information set $\mathcal{I} \setminus \tilde{\mathcal{P}}$ and an effective frozen set $\mathcal{F} \cup \tilde{\mathcal{P}}$. This structure aligns with the definition of a standard \emph{PCD coset} in \eqref{2-1}. Therefore, the \emph{PCD method} in \cite{Liu2506:Parity} is directly applicable to the coset $\mathcal{C}(\hat{u}_{\tilde{\mathcal{P}}})$. The WD of each coset $\mathcal{C}(\hat{u}_{\tilde{\mathcal{P}}})$ can be recursively calculated as the convolution of its two component codes: $\mathcal{C}_1(\hat{u}_{\tilde{\mathcal{P}}})$ and $\mathcal{C}_2(\hat{u}_{\tilde{\mathcal{P}}})$.
These two component codes correspond to the $N/2$-dimension polar codes with information set $\tilde{\mathcal{I}}_{s}$ and frozen set $\tilde{\mathcal{F}}_{s}$ which substitute the information set and frozen set in \eqref{2-2} and \eqref{2-3} by $\mathcal{I}\backslash \tilde{\mathcal{P}}$ and $\mathcal{F}\cup \tilde{\mathcal{P}}$, respectively. The values of the frozen bits for these component codes are explicitly determined by \eqref{2-4}, replacing $u_{1}^{N}$ and $\mathcal{F}_{s}$ by $v_{1}^{N}$ and $\tilde{\mathcal{F}}_{s}$, respectively. Summing over all valid $\hat{u}_{\tilde{\mathcal{P}}}$ yields the total WD as stated in \eqref{eq13}.
\end{proof}

Based on Theorem 1, we can characterize a pre-transformed polar code as a union of \emph{PCD cosets}. Crucially, each coset is formed by the concatenation of two component codes, both of which are standard polar codes. This allows us to apply the \emph{PCD method} \cite{Liu2506:Parity} recursively: we decompose the component codes layer by layer until the code length falls below a specific threshold. Finally, the total weight distribution is derived by convolving the distributions of the leaf component codes and summing them over the set of all cosets.

\subsection{Numerical Results}
Table \ref{table:1} compares the sizes of the expanded information sets obtained by the method in \cite{yao2024deterministic} and the proposed \emph{PCD method} for pre-transformed polar codes with a length of $N=128$ across various code rates. Here, $n_{1}$ and $n_{2}$ denote the sizes of the expanded information set in \cite{yao2024deterministic} and \emph{PCD method}, respectively. Specifically, we utilize PC-Polar codes where the parity-check equation for each parity bit is defined as $v_{i}=u_{i}+u_{i-3}+u_{i-5}+u_{i-6}$. In \cite{yao2024deterministic}, the recursive calculation requires expanding all information bits preceding the frozen bit with the largest index into 0 and 1. Consequently, the size of its expanded information set is equal to the number of information bits located before this last frozen bit. The information set is constructed based on the 5G NR Polar sequence\cite{3gpp38212}. Results indicate that as the code rate increases, the \emph{PCD method} achieves a significant reduction in the expanded information set size compared to \cite{yao2024deterministic}. Furthermore, the size of the expanded information set remains relatively small for the \emph{PCD method} at both high and low code rates.

However, we observe that for general random pre-transformed polar codes and most PAC codes, the \emph{PCD method} achieves negligible reduction in the expanded information set size compared to the method in \cite{yao2024deterministic}. To address this limitation, in the next section, we introduce equivalence classes of pre-transformed polar codes sharing identical Hamming weight distributions to further reduce the set size.

\begin{table}
\renewcommand{\arraystretch}{1.3}
\centering
\caption{Expanded Information Set Sizes for PC-Polar Codes under Different Code Rates (N=128)}
\label{table:1}
\begin{tabular}{|c|c|c|c|c|c|c|c|c|}
\hline
\multirow{1}{*}{$K$} & $n_{1}$ & $n_{2}$ & $K$ & $n_{1}$ & $n_{2}$ & $K$ & $n_{1}$ & $n_{2}$ \\
\hline
12 & 6 & 4 & 51 & 29 & 23 & 89 & 43 & 27 \\ 
\hline
25 & 13 & 13 & 64 & 34 & 28 & 102 & 42 & 22 \\ 
\hline
38 & 24 & 20 & 76 & 46 & 30 & 115 & 53 & 17 \\ 
\hline
\end{tabular}
\end{table}

\section{Equivalence Classes of Pre-Transformed Polar Codes for the PCD Method}
In this section, we first prove that pre-transformation matrices derived from the row vectors of standard polar codes induce a cyclic shift on the codewords. Based on this property, we establish an equivalence class for any pre-transformed polar code that maintains the same Hamming weight distribution. We then minimize the computational complexity by selecting the code with the smallest expanded information set size from this class for the PCD calculation.
\subsection{Equivalence Classes of Pre-Transformed Polar Codes}
We first define the cyclic right shift operation for vectors and then present a lemma that serves as the basis for the equivalence classes of pre-transformed polar codes sharing identical Hamming weight distributions.

\begin{definition}
\label{def:cyclic_shift}
Let $u = (u_{1}, u_{2}, \dots, u_{N}) \in \{0,1\}^{N}$ be a binary vector. The \emph{cyclic right-shift operator} by $s$ positions, denoted as $\sigma_{s}$, maps $u$ to a new vector $\sigma_{s}(u)$ given by:
\begin{equation}
    \sigma_{s}(u) = (u_{N-s+1}, \dots, u_{N}, u_{1}, \dots, u_{N-s}).
\end{equation}
The cyclic shift operation is \emph{linear} over the binary field $\mathbb{F}_2$.
\end{definition}

\begin{lemma}\label{Lemma1}
For any arbitrary row vectors $g_{i}$ and $g_{j}$ of the polar code generator matrix $K_{2}^{\otimes n}$, the summation $\sum_{k:g_{j}(k)=1,k\le N+1-i}{g_{i+k-1}}$ results in $\sigma_{j-1}(g_{i})$.
\end{lemma}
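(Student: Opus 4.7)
The plan is to translate the combinatorial identity into a polynomial identity in $\mathbb{F}_{2}[x]/(x^{N}-1)$, under which the cyclic right shift $\sigma_{j-1}$ corresponds to multiplication by $x^{j-1}$. Each row of $K_{2}^{\otimes n}$ then admits the product form $g_{m}(x)=\prod_{\ell\in \mathrm{supp}(m-1)}(1+x^{2^{\ell}})$, and the indicator $g_{j}(k)=1$ is equivalent to $\mathrm{supp}(k-1)\subseteq \mathrm{supp}(j-1)$. Parameterising the summation index $k$ by subsets $T\subseteq \mathrm{supp}(j-1)$ via $k-1=\sum_{\ell\in T}2^{\ell}$, the lemma becomes the polynomial identity
\begin{equation*}
  \sum_{T\subseteq \mathrm{supp}(j-1)}g_{i+\sum_{\ell\in T}2^{\ell}}(x)\equiv x^{j-1}g_{i}(x)\pmod{x^{N}-1}.
\end{equation*}
The truncation $k\le N+1-i$ in the original sum can be dropped for free: whenever $i+\sum_{T}2^{\ell}>N$ the bit expansion of $(i-1)+\sum_{T}2^{\ell}$ reaches position $n$, producing a factor $1+x^{2^{n}}\equiv 0$ modulo $x^{N}-1$, so the overflow terms contribute nothing.

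The crucial ingredient I would prove first is the one-bit sub-lemma
\begin{equation*}
  g_{i+2^{\ell_{0}}}(x)\equiv (1+x^{2^{\ell_{0}}})g_{i}(x)\pmod{x^{N}-1}
\end{equation*}
for every admissible $i$ and every $\ell_{0}\in \{0,\ldots,n-1\}$. When $\ell_{0}\notin \mathrm{supp}(i-1)$ this is immediate from the product form since no carry occurs. When $\ell_{0}\in \mathrm{supp}(i-1)$ I would repeatedly invoke the characteristic-$2$ squaring $(1+x^{2^{\ell}})^{2}=1+x^{2^{\ell+1}}$ to telescope the cascading carries through positions $\ell_{0},\ell_{0}+1,\ldots$ until either reaching the first free bit position (matching the new bit pattern of $(i-1)+2^{\ell_{0}}$) or reaching position $n$, in which case the factor $1+x^{2^{n}}$ kills both sides modulo $x^{N}-1$, consistent with the overflow case $i+2^{\ell_{0}}>N$.

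With this sub-lemma, the main identity follows by induction on $|\mathrm{supp}(j-1)|$. The base $j=1$ is trivial (only $T=\varnothing$ contributes). For the inductive step I would pick any $\ell_{0}\in \mathrm{supp}(j-1)$, define $j'$ by $\mathrm{supp}(j'-1)=\mathrm{supp}(j-1)\setminus\{\ell_{0}\}$, and split the outer sum by whether $\ell_{0}\in T$. Applying the inductive hypothesis to each half and then the sub-lemma yields
\begin{equation*}
  x^{j'-1}g_{i}(x)+x^{j'-1}g_{i+2^{\ell_{0}}}(x)\equiv x^{j'-1}\bigl(1+(1+x^{2^{\ell_{0}}})\bigr)g_{i}(x)=x^{j-1}g_{i}(x),
\end{equation*}
closing the induction (using $1+1=0$ in $\mathbb{F}_{2}$).

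The main obstacle will be the cascading-carry subcase of the one-bit sub-lemma, where $\ell_{0},\ell_{0}+1,\ldots$ may all lie in $\mathrm{supp}(i-1)$: one must carefully match the chain of algebraic squarings to the sequence of bit flips that occurs when $2^{\ell_{0}}$ is added to $i-1$ in integer arithmetic, and in particular verify that when the carry propagates into position $n$ the algebraic collapse to zero modulo $x^{N}-1$ exactly mirrors the combinatorial vanishing of the out-of-range term $g_{i+2^{\ell_{0}}}$ that falls outside the row index range of $K_{2}^{\otimes n}$. This bookkeeping is what bridges the clean algebraic identity with the combinatorial truncation $k\le N+1-i$ in the original statement.
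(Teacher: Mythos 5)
Your proposal is correct, and it takes a genuinely different route from the paper. The paper stays in the matrix/convolution world: it first shows $e_{i}=\sum_{j:g_{i}(j)=1}g_{j}$ from the involution $K_{2}^{\otimes n}K_{2}^{\otimes n}=I_{N}$, then proves by induction on $n$ the index-additivity property $g_{k}=g_{i}\circledast_{N}g_{j}$ for $i+j-1=k$, and combines the two to get the non-wrapping case $j\le N+1-i$ directly; the wrapping case is then handled separately by embedding $g_{i}$ into $K_{2}^{\otimes n+1}$ and explicitly tracking which part of the shifted vector overflows past position $N$ and reappears at the front. Your polynomial formulation in $\mathbb{F}_{2}[x]/(x^{N}-1)$ absorbs both the cyclic wraparound and the truncation $k\le N+1-i$ into a single mechanism (reduction modulo $x^{N}-1$ and the vanishing of the factor $1+x^{2^{n}}$), so you avoid the paper's two-case split and its overflow bookkeeping entirely; the price is that you must establish the product form $g_{m}(x)=\prod_{\ell\in\mathrm{supp}(m-1)}(1+x^{2^{\ell}})$ and the carry-propagation sub-lemma, but the latter is exactly the Frobenius identity $(1+x^{2^{\ell}})^{2}=1+x^{2^{\ell+1}}$ telescoped along the carry chain, and in fact your one-bit identity $g_{i+2^{\ell_{0}}}(x)=(1+x^{2^{\ell_{0}}})g_{i}(x)$ holds exactly in $\mathbb{F}_{2}[x]$ for the formal products, with the modulus needed only to kill out-of-range indices. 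The step you flag as the main obstacle is therefore not an obstacle at all, since binary addition with carry is mirrored term-for-term by the squaring cascade; your induction on $|\mathrm{supp}(j-1)|$ then closes cleanly, whereas the paper's inductive proof of $g_{k}=g_{i}\circledast_{N}g_{j}$ requires three sub-cases at each length doubling. One small point worth making explicit if you write this up: the inductive hypothesis must be stated for all starting indices $i$, including the formally out-of-range ones for which $g_{i}\equiv 0 \pmod{x^{N}-1}$, so that the half of the sum indexed by $T\ni\ell_{0}$ can legitimately invoke it with base index $i+2^{\ell_{0}}$ even when $i+2^{\ell_{0}}>N$.
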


The proof of Lemma \ref{Lemma1} is given by Appendix \ref{AppendixA}. Lemma \ref{Lemma1} specifies which row vectors need to be linearly combined to obtain a cyclic shift of the vector $g_{i}$. Specifically, the selection of these vectors is determined by the positions of the ones in $g_{j}$, with a shift amount of $j-1$. To facilitate the understanding of Lemma 1, we provide a simple example.

\begin{example}
  In the generator matrix $K_{2}^{\otimes 4}$, we set $i=7$ and $j=6$.  The corresponding row vectors are given by
  \begin{align}
    g_{6}&=(1,1,0,0,1,1,0,0,0,0,0,0,0,0,0,0), \nonumber \\
    g_{7}&=(1,0,1,0,1,0,1,0,0,0,0,0,0,0,0,0). \nonumber
  \end{align}
  The set of indices where the vector $g_{6}$ takes the value 1 is given by $\{1,2,5,6\}$. The result $\sum_{k:g_{6}(k)=1}{g_{k+6}}$ is given by
  \begin{equation*}
    g_{7}+g_{8}+g_{11}+g_{12}=(0,0,0,0,0,1,0,1,0,1,0,1,0,0,0,0).
  \end{equation*}
  This result corresponds exactly to the cyclic right shift of vector $g_{7}$ by 5 positions.
\end{example}

\begin{definition}
Let $g_{i}$ be a row vector of the generator matrix. The pre-transformation matrix generated by $g_{i}$, denoted as $T(g_{i})$, is an $N \times N$ matrix where the $j$-th row is obtained by logically right-shifting $g_{i}$ by $j-1$ positions and truncating the overflow bits. Specifically, the $j$-th row is given by
\begin{equation}
    T(g_{i})_{j} = (0_{1}^{j-1}, g_{i}(1), \dots, g_{i}(N-j+1)),
\end{equation}
where $g_{i}(k)$ denotes the $k$-th element of $g_{i}$.
\end{definition}

In the following, we characterize the equivalence class for an arbitrary pre-transformed polar code, induced by the identification of the Hamming weight distribution.

\begin{theorem}\label{Theorem2}
For a fixed information set, the Hamming weight distribution of a pre-transformed polar code based on the pre-transformation matrix $T^{(0)}$ is identical to that of the code based on the matrix $T^{(0)}T(g_{i})$.
\end{theorem}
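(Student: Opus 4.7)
The plan is to prove Theorem~\ref{Theorem2} by establishing a stronger codeword-level identity: for every valid uncoded vector $u$, the codeword $x = u T^{(0)} K_{2}^{\otimes n}$ of the original code and the codeword $x' = u T^{(0)} T(g_{i}) K_{2}^{\otimes n}$ of the transformed code will be related by $x' = \sigma_{i-1}(x)$. Because both codes share the same information set, they are indexed by the same collection of valid $u$, so this pointwise correspondence immediately forces the two weight distributions to agree term by term, since cyclic shifts preserve Hamming weight.

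To prove this identity, I would absorb the common factor by setting $v = u T^{(0)}$ and reduce the claim to showing $v T(g_{i}) K_{2}^{\otimes n} = \sigma_{i-1}(v K_{2}^{\otimes n})$ for every $v \in \{0,1\}^{N}$. Using the row-wise description of $T(g_{i})$ given in the preceding definition, the $m$-th coordinate of $v T(g_{i})$ equals $\sum_{j \le m} v_{j}\, g_{i}(m-j+1)$. Multiplying through by $K_{2}^{\otimes n}$ and interchanging the two summations (with the substitution $k = m-j+1$, whose range becomes $1 \le k \le N-j+1$) yields
\begin{equation*}
  v T(g_{i}) K_{2}^{\otimes n} = \sum_{j} v_{j} \sum_{k:\, g_{i}(k)=1,\, k \le N-j+1} g_{k+j-1}.
\end{equation*}

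The inner sum matches exactly the object appearing in Lemma~\ref{Lemma1}, but with the two row-index roles interchanged relative to the lemma's statement. Applying the lemma (with its $i$ playing the role of our $j$ and its $j$ playing the role of our $i$) will identify the inner sum as $\sigma_{i-1}(g_{j})$. Invoking the $\mathbb{F}_{2}$-linearity of $\sigma_{i-1}$ noted in Definition~\ref{def:cyclic_shift}, I would then pull the shift outside the outer summation:
\begin{equation*}
  v T(g_{i}) K_{2}^{\otimes n} = \sum_{j} v_{j}\, \sigma_{i-1}(g_{j}) = \sigma_{i-1}\!\Bigl(\sum_{j} v_{j} g_{j}\Bigr) = \sigma_{i-1}(v K_{2}^{\otimes n}),
\end{equation*}
which is the desired codeword-level identity.

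The main obstacle I anticipate is purely index-bookkeeping. Lemma~\ref{Lemma1} is phrased with a specific pairing of row indices and a specific truncation bound $k \le N+1-i$, whereas the definition of $T(g_{i})$ introduces a right-shift whose offset plays the opposite role. I would therefore be careful with the substitution $k = m-j+1$ and with verifying that the truncation $k \le N-j+1$ coincides with the lemma's hypothesis under the role-swap. Once the index matching is clean, the remainder is a single application of linearity, and the weight-distribution claim follows with no further enumeration because the map $x \mapsto \sigma_{i-1}(x)$ is a fixed coordinate permutation applied uniformly to every codeword.
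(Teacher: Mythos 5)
Your proposal is correct and follows essentially the same route as the paper: both hinge on Lemma~\ref{Lemma1} to identify the rows (equivalently, arbitrary linear combinations of rows) of $T(g_{i})K_{2}^{\otimes n}$ as cyclic shifts, use the $\mathbb{F}_2$-linearity of $\sigma_{i-1}$ to pull the shift outside the sum, and conclude from the fact that a fixed coordinate permutation preserves Hamming weight. The paper phrases the argument row-by-row on the generator matrix $T^{(0)}T(g_{j})K_{2}^{\otimes n}$ while you phrase it codeword-by-codeword via $v=uT^{(0)}$, but the index bookkeeping and the role-swap in applying the lemma work out exactly as you describe.
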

\begin{proof}
Let $\tilde{G} = T(g_{j})K_{2}^{\otimes n}$. The $i$-th row of $\tilde{G}$, denoted by $\tilde{g}_{i}$, is given by $\tilde{g}_{i}= \sum_{k:g_{j}(k)=1,k\le N+1-i}{g_{i+k-1}}$. According to Lemma \ref{Lemma1}, this row corresponds to the cyclic shift of the original basis vector, i.e., $\tilde{g}_{i}=\sigma_{j-1}(g_{i})$.

Consequently, let $G'$ denote the generator matrix of the new code, i.e., $G' = T^{(0)} \tilde{G} = T^{(0)} T(g_{j})K_{2}^{\otimes n}$. The $i$-th row of $G'$, denoted by $g'_{i}$, can be derived as:
\begin{align}
    g'_{i} &= \sum_{k: T_{i}^{(0)}(k)=1}{\sigma_{j-1}(g_{k})} = \sigma_{j-1}\left(\sum_{k: T_{i}^{(0)}(k)=1}{g_{k}}\right). \label{eq:linearity}
\end{align}
Equation \eqref{eq:linearity} holds due to the linearity of the cyclic shift operation. Note that the term $\sum_{k: T_{i}^{(0)}(k)=1}{g_{k}}$, represents precisely the $i$-th row of the original generator matrix $T^{(0)} K_{2}^{\otimes n}$.

Thus, the $i$-th row of the new generator matrix $T^{(0)} T(g_{j})K_{2}^{\otimes n}$ is simply the cyclic shift (by $j-1$ positions) of the $i$-th row of the original generator matrix $T^{(0)} K_{2}^{\otimes n}$.
Given that the information set remains unchanged, the mapping $\sigma_{j-1}$ establishes a one-to-one correspondence (isomorphism) between the codewords generated by $T^{(0)} K_{2}^{\otimes n}$ and those generated by $T^{(0)} T(g_{j})K_{2}^{\otimes n}$. Since the cyclic shift $\sigma_{j-1}$ preserves the Hamming weight of any vector, the Hamming weight distributions of these two codes are identical.
\end{proof}

\begin{remark}
  Based on Theorem \ref{Theorem2}, for any arbitrary pre-transformed polar code, a corresponding equivalence class can be identified. The cardinality of each equivalence class is equal to the code length $N$, and all codes within the class share the same information and frozen sets. By transforming the original pre-transformation matrix $T^{(0)}$ into $T^{(0)}T(g_{i})$, it is possible to sparsify the matrix (i.e., reduce the density of ones). This increased sparsity reduces the number of information bits constrained by frozen bits, thereby decreasing the size of the expanded information set. For instance, consider a PAC code with a memory vector of $(1,1,1,1,1)$. Applying the transformation $T(g_{2})$ results in a new memory vector $(1,0,0,0,0,1)$ while maintaining an invariant Hamming weight distribution. Consequently, the equivalent pre-transformation matrix becomes significantly sparser.
\end{remark}

\subsection{Optimized PCD Method via Equivalence Classes}
According to Theorem \ref{Theorem2}, the Hamming weight distribution of a polar code defined by the pre-transformation matrix $T$ is identical to that defined by $T T(g_{j})$. This implies that for any pre-transformed polar code, there exist $N-1$ other equivalent codes. For each candidate matrix $T T(g_{j})$, we can derive the corresponding expanded information set $\tilde{\mathcal{P}}_{j}$ using Algorithm \ref{alg:1}. Consequently, we select $T T(g_{j^{*}})$ as the optimal pre-transformation matrix, where $j^{*}=\operatorname{argmin}_{j}|\tilde{\mathcal{P}}_{j}|$.

Table \ref{table:2} presents the proportion of random pre-transformed polar codes that achieve a reduction in the size of the expanded information set compared to the method in \cite{yao2024deterministic}. Specifically, $r_{1}$ denotes the proportion of codes exhibiting such a reduction using the \emph{PCD method} without equivalence class optimization, while $r_{2}$ represents the proportion achieved when equivalence class optimization is applied. These statistics are derived from Monte Carlo simulations for code lengths of $N=128$ and $N=256$ across various code rates, utilizing the 5G NR Polar sequence\cite{3gpp38212} for code construction.

In the absence of equivalence class optimization, the \emph{PCD method} reduces the number of expanded information bits for approximately 6.7\% of the random pre-transformed polar codes across most configurations compared to \cite{yao2024deterministic}. By contrast, when equivalence class optimization is employed, this proportion significantly increases to approximately 20\%.

Table \ref{table:3} compares the expanded information set sizes before and after optimization for PAC codes with length $N=128$ and a memory vector of $(1,0,1,0,1,0,1,1)$. Let $n_{1}$, $n_{2}$ and $n_{3}$ denote the sizes of the expanded information sets obtained by the method in \cite{yao2024deterministic}, the \emph{PCD method} without equivalence class optimization, and the \emph{PCD method} with equivalence class optimization, respectively. Under every code rate, the \emph{PCD method} without equivalence class optimization yields the same expanded information set size as the method in \cite{yao2024deterministic}. After equivalence class optimization, the memory vector is given by $(1,0,0,0,0,0,0,1,1,1)$. The results demonstrate a reduction in size across all code rates. Notably, a maximum reduction of 4 bits is achieved, which reduces the computational complexity to $1/16$ of the original level.

\begin{table}
\renewcommand{\arraystretch}{1.3}
\centering
\caption{Average Reduction Ratio of the Expanded Information Set Size for Random Pre-transformed Polar Codes}
\label{table:2}
\begin{tabular}{|c|c|c|c|c|c|}
\hline
\multicolumn{3}{|c|}{$N = 128$} & \multicolumn{3}{c|}{$N = 256$} \\ 
\cline{1-6}
\multirow{1}{*}{$K$} & $r_{1}$ & $r_{2}$ & $K$ & $r_{1}$ & $r_{2}$ \\ 
\hline
25 & 0.4\% & 1.52\% & 51 & 6.66\% & 21.21\% \\ 
\hline
38 & 6.63\% & 20.51\% & 76 & 6.65\% & 20.42\% \\ 
\hline
51 & 6.28\% & 18.73\% & 102 & 6.72\% & 23.43\% \\ 
\hline
64 & 6.61\% & 21.21\% & 128 & 6.29\% & 18.91\% \\ 
\hline
76 & 6.56\% & 22.24\% & 153 & 6.70\% & 22.44\% \\ 
\hline
89 & 6.65\% & 20.63\% & 179 & 6.57\% & 19.72\% \\ 
\hline
102 & 0.4\% & 2.92\% & 204 & 6.70\% & 21.02\% \\ 
\hline
115 & 6.65\% & 21.61\% & 230 & 6.64\% & 22.24\% \\ 
\hline
\end{tabular}
\end{table}

\begin{table}
\renewcommand{\arraystretch}{1.3}
\centering
\caption{Reduction of the Expanded Information Set Size for a 128-length PAC code with memory $(1,0,1,0,1,0,1,1)$}
\label{table:3}
\begin{tabular}{|c|c|c|c|c|c|c|c|}
\hline
\multirow{1}{*}{$K$} & $n_{1}$ & $n_{2}$ & $n_{3}$ & $K$ & $n_{1}$ & $n_{2}$ & $n_{3}$ \\
\hline
25 & 13 & 13 & 11 & 76 & 46 & 46 & 42 \\ 
\hline
38 & 24 & 24 & 20 & 89 & 43 & 43 & 39 \\ 
\hline
51 & 29 & 29 & 27 & 102 & 42 & 42 & 40 \\ 
\hline
64 & 34 & 34 & 30 & 115 & 53 & 53 & 49 \\
\hline
\end{tabular}
\end{table}

\section{Conclusion}
This paper proposes an efficient recursive algorithm based on the Parity-Consistent Decomposition (PCD) method to determine the complete weight distribution of pre-transformed polar codes. We introduce an \emph{Expanded Information Set} to resolve the bit dependencies caused by pre-transformation, thereby enabling the recursive decomposition. Furthermore, by exploiting code equivalence classes to minimize the size of this set, we achieve a substantial reduction in computational overhead. Numerical results confirm that the proposed method significantly outperforms existing deterministic algorithms in terms of complexity.

\appendix
\subsection{Proof of Lemma \ref{Lemma1}}\label{AppendixA}
\begin{proposition}\label{Lemma2}
Let $e_{i}=(0_{1}^{i-1},1,0_{i+1}^{N})$ denote the standard basis vector with a 1 at the $i$-th position. It can be expressed as $e_{i}=\sum_{j:g_{i}(j)=1}{g_{j}}$, where $g_{l}$ denotes the $l$-th row vector of the polar code generator matrix $K_{2}^{\otimes n}$ for $l\in \llbracket 1,N\rrbracket$.
\end{proposition}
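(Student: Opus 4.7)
The plan is to exploit the involutory property of the Ar{\i}kan generator matrix. Concretely, I would first show that $K_{2}^{\otimes n}$ is self-inverse over $\mathbb{F}_{2}$, i.e., $(K_{2}^{\otimes n})^{2}=I_{N}$. This is a one-line consequence of the base case $K_{2}^{2}=I_{2}$ combined with the mixed-product rule $(A\otimes B)(C\otimes D)=(AC)\otimes(BD)$, which iteratively yields $(K_{2}^{\otimes n})^{2}=(K_{2}^{2})^{\otimes n}=I_{2}^{\otimes n}=I_{N}$.

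Having established the matrix identity, the next step is to read it row by row. Writing $G=K_{2}^{\otimes n}$, the $i$-th row of $G\cdot G$ is the $\mathbb{F}_{2}$-linear combination of the rows of $G$ whose coefficients are the entries of the $i$-th row of $G$, namely $\sum_{k=1}^{N}G_{i,k}\,g_{k}=\sum_{k=1}^{N}g_{i}(k)\,g_{k}$. Equating this with the $i$-th row of $I_{N}$, which is exactly $e_{i}$, and discarding the vanishing terms for which $g_{i}(k)=0$, produces the claimed identity $e_{i}=\sum_{j:g_{i}(j)=1}g_{j}$.

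Because the argument reduces to one Kronecker-product calculation followed by a row extraction, there is no genuine obstacle. The only bookkeeping point worth highlighting is the identification $g_{i}(k)=G_{i,k}$ between the row-vector indexing used in the proposition and the matrix-entry indexing used in the product $G\cdot G$, together with the remark that arithmetic over $\mathbb{F}_{2}$ is what converts the weighted sum $\sum_{k}g_{i}(k)g_{k}$ into the indicator-style sum over $\{j:g_{i}(j)=1\}$ as written in the statement.
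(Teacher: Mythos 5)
Your proposal is correct and follows essentially the same route as the paper: both rest on the self-inverse property $K_{2}^{\otimes n}K_{2}^{\otimes n}=I_{N}$ and then read off the $i$-th row of the product as the $\mathbb{F}_{2}$-linear combination $\sum_{j:g_{i}(j)=1}g_{j}$. The only cosmetic difference is that the paper cites the involutory property from Ar{\i}kan's original work, whereas you supply the short Kronecker mixed-product derivation yourself.
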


\begin{proof}
According to \cite{5075875}, the generator matrix of polar codes satisfies the self-inverse property $K_{2}^{\otimes n}K_{2}^{\otimes n}=I_{N}$. For the identity matrix $I_{N}$, the $i$-th row vector is $e_{i}$. For the product $K_{2}^{\otimes n}K_{2}^{\otimes n}$, the $i$-th row vector is given by $g_{i}K_{2}^{\otimes n}$. Since the matrix multiplication corresponds to a linear combination of rows, we have $g_{i}K_{2}^{\otimes n}=\sum_{j:g_{i}(j)=1}{g_{j}}$.
\end{proof}

\begin{proposition}\label{Lemma3}
The row vectors of $K_{2}^{\otimes n}$ satisfy the property: $g_{k}=g_{i} \circledast_{N} g_{j}$, where $i$ and $j$ satisfy $i+j-1=k$, and $\circledast_{N}$ denotes the operation of taking the first $N$ bits of the convolution result.
\end{proposition}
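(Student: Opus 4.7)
The plan is to prove the identity by encoding each row vector as a polynomial. Define $G_{i}(x)=\sum_{l=1}^{N} g_{i}(l)\,x^{l-1}\in\mathbb{F}_{2}[x]$, a polynomial of degree less than $N$. Under this identification, the truncated convolution $g_{i}\circledast_{N}g_{j}$ corresponds exactly to the product $G_{i}(x)G_{j}(x)$ modulo $x^{N}$, so the proposition is equivalent to showing
\[
G_{i}(x)\,G_{j}(x) \equiv G_{k}(x) \pmod{x^{N}} \qquad \text{whenever } k=i+j-1\in\llbracket 1,N\rrbracket.
\]

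The first step is to derive a closed-form factorization of $G_{i}(x)$. Writing $i-1=\sum_{l=0}^{n-1} b_{l}\,2^{l}$ in binary, I would prove by induction on $n$ that
\[
G_{i}(x)=\prod_{l=0}^{n-1}\bigl(1+x^{2^{l}}\bigr)^{b_{l}}.
\]
The base case $n=1$ is immediate from $K_{2}$. The inductive step uses the Kronecker recursion $K_{2}^{\otimes n}=K_{2}\otimes K_{2}^{\otimes n-1}$: for $i\le N/2$, $g_{i}=(g_{i}^{(n-1)},\,0_{1}^{N/2})$ gives $G_{i}(x)=G_{i}^{(n-1)}(x)$, while for $i>N/2$, $g_{i}=(g_{i-N/2}^{(n-1)},\,g_{i-N/2}^{(n-1)})$ gives $G_{i}(x)=(1+x^{2^{n-1}})\,G_{i-N/2}^{(n-1)}(x)$. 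In either case the new factor accounts for $b_{n-1}$ and the remaining factors come from the inductive hypothesis.

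The second step uses this factorization to verify the multiplicative identity. Setting $c_{l}=b_{l}(i-1)+b_{l}(j-1)\in\{0,1,2\}$,
\[
G_{i}(x)\,G_{j}(x)=\prod_{l=0}^{n-1}\bigl(1+x^{2^{l}}\bigr)^{c_{l}}.
\]
The crucial $\mathbb{F}_{2}$-identity $(1+x^{2^{l}})^{2}=1+x^{2^{l+1}}$ lets each $c_{l}=2$ be rewritten as a carry into position $l+1$. I would formalize this by induction on $\sum_{l}\lfloor c_{l}/2\rfloor$: each rewrite $(1+x^{2^{l}})^{2}\mapsto 1+x^{2^{l+1}}$ decreases this quantity while updating the configuration $(c_{l})$ exactly as one step of binary-addition carry propagation on $(i-1)+(j-1)$. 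After all carries resolve, the surviving exponents are the binary digits $d_{l}$ of $k-1$, yielding $\prod_{l:d_{l}=1}(1+x^{2^{l}})=G_{k}(x)$.

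The main obstacle is bookkeeping in the carry cascade: a carry out of bit $l$ may collide with an existing one at bit $l+1$, spawning another carry that must propagate further, and one must check the rewriting terminates at the binary representation of $k-1$. The hypothesis $i+j-1\le N$ guarantees $k-1<2^{n}$, so no carry ever escapes the range $\{0,\ldots,n-1\}$; consequently the reduction modulo $x^{N}$ is vacuous and the proposition follows.
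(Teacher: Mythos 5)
Your proof is correct and takes a genuinely different route from the paper's. The paper proves the claim by induction on $n$ using the block recursion $K_{2}^{\otimes n}=\bigl[\begin{smallmatrix}K_{2}^{\otimes n-1} & 0\\ K_{2}^{\otimes n-1} & K_{2}^{\otimes n-1}\end{smallmatrix}\bigr]$, splitting into three cases according to whether $k\le 2^{n-1}$, $k=2^{n-1}+1$, or $k\ge 2^{n-1}+2$, with further sub-cases on $i$; the middle case in particular requires a separate ad hoc computation. Your polynomial encoding replaces all of this with the single structural fact $G_{i}(x)=\prod_{l}(1+x^{2^{l}})^{b_{l}(i-1)}$, after which the product identity is pure arithmetic; it also cleanly explains why the truncation $\bmod\ x^{N}$ is vacuous (the invariant $\sum_{l}c_{l}2^{l}=k-1<2^{n}$ bounds the degree by $N-1$), something the paper never isolates. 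Two small remarks. First, your termination measure $\sum_{l}\lfloor c_{l}/2\rfloor$ does not strictly decrease under a rewrite at position $l$ when $c_{l+1}=1$ (it drops by one at $l$ but rises by one at $l+1$); use $\sum_{l}c_{l}$ instead, which decreases by exactly one per rewrite while $\sum_{l}c_{l}2^{l}$ is invariant, and note that $c_{l}$ can transiently reach $3$ during the cascade. Second, the entire carry argument can be bypassed: over $\mathbb{F}_{2}$ one has $\prod_{l:b_{l}=1}(1+x^{2^{l}})=(1+x)^{\sum_{l}b_{l}2^{l}}=(1+x)^{i-1}$ by the Frobenius identity, so $G_{i}(x)G_{j}(x)=(1+x)^{i+j-2}=G_{k}(x)$ immediately, with degree $k-1\le N-1$. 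Either way the argument is sound and arguably tighter than the paper's.
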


\begin{proof}
We proceed by mathematical induction on $n$. For any arbitrary $k$, it suffices to verify that the relation $g_{k}=g_{i}\circledast_{N}g_{k+1-i}$ holds for all $1\le i\le k$.

\emph{Base Case ($n=1$):}
For $N=2$, the row vectors are $g_1=(1,0)$ and $g_2=(1,1)$. It is straightforward to verify that $g_{1}=g_{1} \circledast_{2} g_{1}$ and $g_{2}=g_{1} \circledast_{2} g_{2}=g_{2} \circledast_{2} g_{1}$. The property holds.

\emph{Inductive Step:}
Assume that for $K_{2}^{\otimes n-1}$, the row vectors satisfy $g_{k}=g_{i} \circledast_{N/2} g_{j}$ for $i+j-1=k$. Now consider the matrix $K_{2}^{\otimes n}$ with block length $N$.

\emph{Case 1:} $1\le k\le 2^{n-1}$. Due to the recursive structure of $K_{2}^{\otimes n}$, the first $N/2$ rows effectively correspond to the rows of $K_{2}^{\otimes n-1}$ padded with zeros. Thus, we directly obtain $g_{k}=g_{i} \circledast_{N} g_{j}$ for $i+j-1=k$.

\emph{Case 2:} $k=2^{n-1}+1$. First, observe that $g_{2^{n-1}+1}=g_{1} \circledast_{N} g_{2^{n-1}+1}=g_{2^{n-1}+1} \circledast_{N} g_{1}$.
Additionally, since $g_{2^{n-1}+1}=(1,0_{2}^{N/2},1,0_{N/2+2}^{N})$ and $g_{2^{n-1}}=(1_{1}^{N/2},0_{N/2+1}^{N})$, calculation shows that $g_{2^{n-1}+1}=g_{2} \circledast_{N} g_{2^{n-1}}=g_{2^{n-1}} \circledast_{N} g_{2}$.
From the inductive hypothesis, we know $g_{2^{n-1}}=g_{i} \circledast_{N} g_{j}$ where $i+j=2^{n-1}+1$.
Substituting this into the previous equation, $g_{2^{n-1}+1}$ can be expressed as:
\begin{equation}
    g_{2^{n-1}+1}=g_{i} \circledast_{N} g_{j} \circledast_{N} g_{2}.
\end{equation}
For $3\le i\le 2^{n-1}$, we have $j+2\le 2^{n-1}$. Applying the inductive hypothesis again, we have $g_{j} \circledast_{N} g_{2}=g_{j+1}$.
Therefore, $g_{2^{n-1}+1}=g_{i} \circledast_{N} g_{j+1}$.

In summary, for $k=2^{n-1}+1$, the property $g_{k}=g_{i} \circledast_{N} g_{j}$ holds where $i+j-1=k$.

\emph{Case 3:} $2^{n-1}+2\le k\le 2^{n}$. Recall the recursive definition $K_{2}^{\otimes n}=\begin{bmatrix} K_{2}^{\otimes n-1} & 0 \\ K_{2}^{\otimes n-1} & K_{2}^{\otimes n-1} \end{bmatrix}$.
Let $g_{k'}$ denote a row in $K_{2}^{\otimes n-1}$ satisfying $g_{k'}=g_{i'} \circledast_{N/2} g_{j'}$ with $i'+j'-1=k'$.
The row $g_{k'+N/2}$ in $K_{2}^{\otimes n}$ is constructed as $(g_{k'},g_{k'})$. Based on the convolution property, this can be written as:
\begin{equation}\label{4-12}
  g_{k'+N/2}=(g_{i'} \circledast_{N/2} g_{j'}, g_{i'} \circledast_{N/2} g_{j'}) = g_{i'} \circledast_{N} g_{j'+N/2}
\end{equation}
Now, for any $i, j$ satisfying $i+j-1=k$, we discuss three sub-cases for $i$:
\begin{enumerate}
    \item If $1\le i\le k-2^{n-1}$, then $j\ge 2^{n-1}+1$. The relation $g_{k}=g_{i} \circledast_{N} g_{j}$ holds directly from \eqref{4-12}.
    \item If $k-2^{n-1}<i\le 2^{n-1}+1$, then $k-2^{n-1}<j<2^{n-1}+1$. We have:
    \begin{align}\label{Th21}
       g_{i} \circledast_{N} g_{j} &\overset{(a)}{=} g_{i} \circledast_{N} g_{2^{n-1}+2-i} \circledast_{N} g_{i+j-2^{n-1}-1} \nonumber \\
       &= g_{2^{n-1}+1} \circledast_{N} g_{i+j-2^{n-1}-1} \overset{(b)}{=} g_{k}
    \end{align}
    where (a) follows from Case 1 and (b) follows from equation \eqref{4-12}.
    \item If $1+2^{n-1}\le i\le k$, then by equation \eqref{4-12}, we have $g_{i} \circledast_{N} g_{j}=g_{k}$. 
\end{enumerate}
\end{proof}

\emph{Proof of Lemma \ref{Lemma1}}: 
\emph{Case 1:} $1\le j\le N+1-i$. For the row vector $g_{i}$, the last $N-i$ elements are all zeros. In this scenario, $\sigma_{j-1}(g_{i})$ is equivalent to
\begin{align}
    g_{i} \circledast_{N} e_{j} &\overset{(a)}{=} g_{i} \circledast_{N} \left(\sum_{k:g_{j}(k)=1}{g_{k}}\right) = \sum_{k:g_{j}(k)=1}{g_{i} \circledast_{N} g_{k}} \nonumber \\
    &\overset{(b)}{=} \sum_{k:g_{j}(k)=1}{g_{i+k-1}}
\end{align}
where (a) and (b) follow from Propositins \ref{Lemma2} and \ref{Lemma3}, respectively.

\emph{Case 2:} $N+1-i<j\le N$. Consider the extended matrix $K_{2}^{\otimes n+1}=\begin{bmatrix} K_{2}^{\otimes n} & 0 \\ K_{2}^{\otimes n} & K_{2}^{\otimes n} \end{bmatrix}$. In this matrix, the row vector $g_{i}$ is followed by $2N-i$ zeros. Let $g_{i}^{(N)}$ denote the $i$-th row in $K_{2}^{\otimes n}$ and $g_{i}^{(2N)}$ denote the $i$-th row in $K_{2}^{\otimes n+1}$.
The convolution of $g_{i}^{(2N)}$ with $e_{j}$ in the $2N$-dimensional space corresponds to a right shift by $j-1$ positions:
\begin{align}
    g_{i}^{(2N)} \circledast_{2N} e_{j} &= \sum_{k:g_{j}^{(N)}(k)=1}{g_{i+k-1}^{(2N)}} \\
    &= \sum_{\substack{k:g_{j}^{(N)}(k)=1 \\ k\le N+1-i}}{g_{i+k-1}^{(2N)}} + \sum_{\substack{k:g_{j}^{(N)}(k)=1 \\ k>N+1-i}}{g_{i+k-1}^{(2N)}} \nonumber
\end{align}
For the first term, since $i+k-1\le N$, it can be rewritten as:
\begin{equation}
  \sum_{\substack{k:g_{j}^{(N)}(k)=1,\\ k\le N+1-i}}{g_{i+k-1}^{(2N)}} = \sum_{\substack{k:g_{j}^{(N)}(k)=1,\\ k\le N+1-i}}{(g_{i+k-1}^{(N)},0_{N+1}^{2N})}
\end{equation}
Let us denote the result of this first term as $(a_{1}^{N},0_{1}^{N})$.

For the second term, we have:
\begin{equation}
  \sum_{\substack{k:g_{j}^{(N)}(k)=1,\\ k>N+1-i}}{g_{i+k-1}^{(2N)}} = \sum_{\substack{k:g_{j}^{(N)}(k)=1,\\ k>N+1-i}}{(g_{i+k-1-N}^{(N)},g_{i+k-1-N}^{(N)})}
\end{equation}
Since $k\le j$, the ones in vector $g_{i+k-1-N}^{(N)}$ can only exist in the first $i+j-1-N$ positions. Given that $i+j-1-N\le j-1$, we can express the result of the second term as $(w_{1}^{j-1},0_{j}^{N},w_{1}^{j-1},0_{N+j}^{2N})$, where $w_{1}^{j-1}$ corresponds to the first $j-1$ elements of the summation $\sum_{k>N+1-i}{g_{i+k-1-N}^{(N)}}$.

Thus, the full convolution is:
\begin{equation}\label{eq21}
  g_{i}^{(2N)} \circledast_{2N} e_{j} = (a_{1}^{j-1}+w_{1}^{j-1}, a_{j}^{N}, w_{1}^{j-1}, 0_{N+j}^{2N})
\end{equation}
Since $i\le N$, we have $g_{i}^{(N)}=g_{i}^{(2N)}$ (in the first $N$ bits), and thus the simple shift implies:
\begin{equation}\label{eq22}
  g_{i}^{(2N)} \circledast_{2N} e_{j} = (0_{1}^{j-1}, g_{i}^{(N)}, 0_{j+N}^{2N})
\end{equation}
Comparing \eqref{eq21} and \eqref{eq22}, we identify that $w_{1}^{j-1}=(g_{i}^{(N)}(N-j+2),\cdots,g_{i}^{(N)}(N))$, which corresponds to the elements from $N+1$ to $N+j-1$ of the shifted vector. Furthermore, $a_{1}^{j-1}+w_{1}^{j-1}=0_{1}^{j-1}$ implies $a_{1}^{j-1}=w_{1}^{j-1}$. Additionally, it is derived that $a_{j}^{N}=(g_{i}^{(N)}(1),\cdots, g_{i}^{(N)}(N-j+1))$. The vector $a_{1}^{N}$ can be expressed as
\begin{equation*}
  (g_{i}^{(N)}(N-j+2),\cdots,g_{i}^{(N)}(N),g_{i}^{(N)}(1),\cdots, g_{i}^{(N)}(N-j+1))
\end{equation*}

Consequently, the term $\sum_{k:g_{j}^{(N)}(k)=1,k\le N+1-i}{g_{i+k-1}^{(N)}}=a_{1}^{N}$ effectively takes the part of the shifted vector that overflowed beyond length $N$ (the $w_{1}^{j-1}$ part) and moves it to the beginning. This precisely describes a cyclic right shift of $g_{i}^{(N)}$ by $j-1$ positions. $\hfill\square$




\bibliographystyle{IEEEtran} 
\bibliography{citation}    

\end{document}